	\titlespacing{\section}{0pt}{12pt}{0pt}
	\titlespacing{\subsection}{0pt}{6pt}{0pt}
	\definecolor{linkred}{rgb}{0.6,0,0}
	\definecolor{linkblue}{rgb}{0,0,0.6}
\theoremstyle{plain}
	\newtheorem{theorem}{Theorem}
	\newtheorem{proposition}[theorem]{Proposition}
	\newtheorem{corollary}[theorem]{Corollary}
	\newtheorem{remark}[theorem]{Remark}
\newcommand\blfootnote[1]{
	\begingroup
	\renewcommand\thefootnote{}\footnote{#1}
	\addtocounter{footnote}{-1}
	\endgroup
}
\newcommand{\bb}{\boldsymbol{b}}
\newcommand {\dd}{\mathrm{d}}
\newcommand {\h}{\hbar}
\newcommand{\cl}{\mathcal{L}}
\newcommand{\LL}{\boldsymbol{L}}
\newcommand{\modm}{\mathcal{M}}
\newcommand{\mmu}{\boldsymbol{\mu}}
\newcommand {\x}{\widehat{x}}
\newcommand {\y}{\widehat{y}}
\newcommand {\zz}{\bm{z}}
\newcommand{\bz}{\mathbb{Z}}
\begin{document}

{\large \bfseries Topological recursion on the Bessel curve}

{\bfseries Norman Do and Paul Norbury}

The Witten--Kontsevich theorem states that a certain generating function for intersection numbers on the moduli space of stable curves is a tau-function for the KdV integrable hierarchy. This generating function can be recovered via the topological recursion applied to the Airy curve $x=\frac{1}{2}y^2$. In this paper, we consider the topological recursion applied to the irregular spectral curve $xy^2=\frac{1}{2}$, which we call the {\em Bessel curve}. We prove that the associated partition function is also a KdV tau-function, which satisfies Virasoro constraints, a cut-and-join type recursion, and a quantum curve equation. Together, the Airy and Bessel curves govern the local behaviour of all spectral curves with simple branch points.
\blfootnote{\emph{2010 Mathematics Subject Classification:} 14N10; 05A15; 32G15. \\
\emph{Date:} \today \\ 
The first author was supported by the Australian Research Council grant DE130100650.}

~

\hrule

\setlength{\parskip}{0pt}
\tableofcontents
\setlength{\parskip}{4pt}

\section{Introduction} \label{sec:introduction}

The topological recursion of Chekhov, Eynard and Orantin takes as input the data of a spectral curve, essentially a Riemann surface $C$ equipped with two meromorphic functions and a bidifferential satisfying some mild assumptions~\cite{CEyHer,EOrInv}. From this information, it produces so-called correlation differentials $\omega_{g,n}$ on $C$ for integers $g \geq 0$ and $n \geq 1$. Although topological recursion was originally inspired by the loop equations in the theory of matrix models, it has over the last decade found widespread applications to various problems across mathematics and physics. For example, it is known to govern the enumeration of maps on surfaces~\cite{ACNPMod, DMaQua, DNoCou, DMSS, DOPS, KZoVir, NorCou}, various flavours of Hurwitz problems~\cite{BHLM, BMa, DDMTop, DLNOrb, EMSLap}, the Gromov--Witten theory of $\mathbb{P}^1$~\cite{DOSSIde,NScGro} and toric Calabi--Yau threefolds~\cite{BKMPRem, EOrCom, FLZEyn}. There are also conjectural relations to quantum invariants of knots~\cite{BEyAll, DFMVol}. Much of the power of the topological recursion lies in its universality --- in other words, its wide applicability across broad classes of problems --- and its ability to reveal commonality among such problems. 

One common feature of the problems governed by topological recursion is that their associated correlation differentials often possess the same local behaviour. In particular, the fact that their spectral curves generically resemble $x = \frac{1}{2} y^2$ locally lifts to a statement concerning the correlation differentials. The invariants $\omega_{g,n}$ of the {\em Airy curve} $x=\frac{1}{2} y^2$ are total derivatives of the following generating functions for intersection numbers of Chern classes of the tautological line bundles $\cl_i$ on the moduli space of stable curves $\overline{\modm}_{g,n}$~\cite{EOrTop}.
\[
K_{g,n}(z_1, \ldots, z_n) = \frac{1}{2^{2g-2+n}}\sum_{|\mathbf{d}| = 3g-3+n} \int_{\overline{\modm}_{g,n}} c_1(\cl_1)^{d_1} \cdots c_1(\cl_n)^{d_n}\prod_{i=1}^n\frac{(2d_i-1)!!}{z_i^{2d_i+1}}
\]

The usual assumption on spectral curves is that the zeroes of $\dd x$ are simple and away from the zeroes of $\dd y$. (Higher order zeros of $\dd x$ can often be handled via the global topological recursion of Bouchard and Eynard~\cite{BEyThi}.) However, an implicit assumption that appears in the literature is that $y$ has no pole at a zero of $\dd x$, in which case we say that the spectral curve is regular. In previous work~\cite{DNoTop}, the authors consider irregular spectral curves, in which poles of $y$ may coincide with a zero of $\dd x$. If the pole has order greater than one, then that particular branch point makes no contribution to the correlation differentials and can be removed from the spectral curve. On the other hand, when the pole is simple, non-trivial correlation differentials arise. We note that irregular spectral curves do arise ``in nature'', for example in matrix models with hard edge behaviour~\cite{AMMBGW} and the enumeration of dessins d'enfant~\cite{DNoTop, KZoVir}.

The previous discussion leads us naturally to consider the \emph{Bessel curve}, defined by the meromorphic functions\footnote{The name {\em Bessel curve} is derived from its quantum curve, which is given by a modified Bessel equation --- see Section~\ref{sec:quantumcurve}.}
\[
x(z) = \frac{1}{2} z^2 \qquad \text{and} \qquad y(z) = \frac{1}{z}.
\]
For $2g-2+n > 0$, the correlation differentials produced by the topological recursion have an expansion
\[
\omega_{g,n}(z_1, \ldots, z_n) = \sum_{\mu_1, \ldots, \mu_n=1}^\infty U_{g,n}(\mu_1, \ldots, \mu_n) \prod_{i=1}^n \frac{\mu_i \, \dd z_i}{z_i^{\mu_i+1}}.
\]
From these expansion coefficients, we define the Bessel partition function
\[
Z(p_1, p_2, \ldots; \h) = \exp \left[ \sum_{g=1}^\infty \sum_{n=1}^\infty \sum_{\mu_1, \ldots, \mu_n=1}^\infty U_{g,n}(\mu_1, \ldots, \mu_n) \frac{\h^{2g-2+n}}{n!} p_{\mu_1} \cdots p_{\mu_n} \right]
\]
and its associated wave function via the so-called principal specialisation
\[
\psi(z, \h) = \left. Z(p_1, p_2, \ldots; \h) \right|_{p_i=z^{-i}}.
\]

The main theme and motivation behind this paper is that statements concerning the Airy curve and its relation to the Kontsevich--Witten KdV tau-function have analogues in the case of the Bessel curve. In particular, topological recursion applied to the Bessel curve is fundamentally related to the Br\'{e}zin--Gross--Witten (BGW) tau-function for the KdV hierarchy. This is to be expected, since an irregular curve represents so-called hard edge behaviour in matrix models --- see for example the Laguerre model in~\cite{CheHar}, which is related to the BGW tau-function again via matrix model techniques in~\cite{MMSUni}. The modest contribution of this paper is a direct proof of the relationship between the BGW tau-function and topological recursion applied to the Bessel curve. We make this connection via deriving Virasoro constraints for the partition function arising from the topological recursion and comparing these to Virasoro constraints for the BGW tau-function \cite{AleCut,AMMBGW, MMSUni}. Alexandrov has recently proven Virasoro constraints, a cut-and-join equation and a quantum curve for for the BGW tau-function using matrix model methods and a beautiful description of the point in the Sato Grassmannian determined by the tau-function~\cite{AleCut}. Once the link between topological recursion and the BGW tau-function is established in Theorem~\ref{thm:taufunction}, further properties of the partition function arising from topological recursion --- Virasoro constraints, a cut-and-join equation and a quantum curve --- are equivalent to those of Alexandrov.  The topological recursion viewpoint helps to explain these properties---the Virasoro constraints are fundamental to topological recursion particularly via Kazarian's treatment \cite{KZoVir}; the cut-and-join equation is essentially another way to express topological recursion---see Theorem~\ref{thm:evolution}; and the quantum curve is expected to be related via a WKB expansion to topological recursion.  Topological recursion helps to explain the properties above, but it does not explain the relationship with KdV.

{\em Acknowledgements.} The authors would like to thank Alexander Alexandrov for numerous discussions.

\section{Topological recursion on the Bessel curve} \label{sec:recursion}

\subsection{Topological recursion}

We briefly recall the construction of the correlation differentials for a rational spectral curve via topological recursion. A statement of the topological recursion in greater generality --- for example, in the case of higher genus spectral curves, locally-defined spectral curves, or spectral curves with non-simple branch points --- can be found elsewhere in the literature~\cite{BEyThi, DOSSIde, EOrInv}.

\begin{itemize}
\item {\bf Input.} A rational spectral curves consists of the data of two meromorphic functions $x$ and $y$ on $\mathbb{CP}^1$. We assume that each zero of $\dd x$ is simple and does not coincide with a zero of $\dd y$. The topological recursion defines symmetric meromorphic multidifferentials $\omega_{g,n}$ on $(\mathbb{CP}^1)^n$ for $g \geq 0$ and $n \geq 1$.\footnote{By a multidifferential on $C^n$, we mean a meromorphic section of the line bundle $\pi_1^*(T^*C) \otimes \pi_2^*(T^*C) \otimes \cdots \otimes \pi_n^*(T^*C)$ on the Cartesian product $C^n$, where $\pi_i: C^n \to C$ denotes projection onto the $i$th factor. We often drop the symbol $\otimes$ when writing multidifferentials.} We refer to these as \emph{correlation differentials}.

\item {\bf Base cases.} The base cases for the topological recursion are given by
\[
\omega_{0,1}(z_1) = -y(z_1) \, \dd x(z_1) \qquad \text{and} \qquad \omega_{0,2}(z_1, z_2) = \frac{\dd z_1 \otimes \dd z_2}{(z_1-z_2)^2}.
\]

\item {\bf Recursion.} The correlation differentials $\omega_{g,n}$ for $2g-2+n>0$ are defined recursively via the following equation.
\[
\omega_{g,n}(z_1, \zz_S) = \sum_{\dd x(\alpha) = 0} \mathop{\text{Res}}_{z=\alpha} K(z_1, z) \Bigg[ \omega_{g-1,n+1}(z, \overline{z}, \zz_S) + \mathop{\sum_{g_1+g_2=g}}_{I \sqcup J = S}^\circ \omega_{g_1,|I|+1}(z, \zz_I) \, \omega_{g_2,|J|+1}(\overline{z}, \zz_J) \Bigg]
\]
Here, we use the notation $S = \{2, 3, \ldots, n\}$ and $\zz_I = \{z_{i_1}, z_{i_2}, \ldots, z_{i_k}\}$ for $I = \{i_1, i_2, \ldots, i_k\}$. The outer summation is over the zeroes of $\dd x$, which we refer to as \emph{branch points}. The function $z \mapsto \overline{z}$ denotes the meromorphic involution defined locally at the branch point $\alpha$ satisfying $x(\overline{z}) = x(z)$ and $\overline{z} \neq z$. The symbol $\circ$ over the inner summation means that we exclude any term that involves $\omega_{0,1}$. Finally, the recursion kernel is given by
\[
K(z_1,z) = -\frac{\int_\infty^z \omega_{0,2}(z_1, \,\cdot\,)}{[y(z)-y(\overline{z})] \, \dd x(z)}.
\] 
\end{itemize}

In previous work, the authors considered the local behaviour of spectral curves and their correlation differentials, and classified branch points into the following three types~\cite{DNoTop}.
\begin{itemize}
\item {\bf Regular.} We say that a branch point is \emph{regular} if $y(z)$ is analytic there. In this case, there is a pole of $\omega_{g,n}$ of order $6g-4+2n$ at the branch point, for $2g-2+n>0$. Note that some of the previous literature on the topological recursion implicitly assumes that the spectral curves under consideration only have regular branch points.
\item {\bf Irregular.} We say that a branch point is {\em irregular} if $y(z)$ has a simple pole there. In this case, there is a pole of $\omega_{g,n}$ of order $2g$ at the branch point, for $2g-2+n>0$.
\item {\bf Removable.} We say that a branch point is {\em removable} if $y(z)$ has a higher order pole there. In this case, the recursion kernel has a zero at the branch point and there is no contribution to the correlation differentials coming from the residue at the branch point.
\end{itemize}

At a regular branch point, a spectral curve locally resembles the Airy curve, which is given by
\[
x(z) = \frac{1}{2} z^2 \qquad \text{and} \qquad y(z) = z.
\]
This property lifts to the fact that the correlation differentials for an arbitrary spectral curve expanded at a regular branch point behave asymptotically like the correlation differentials for the Airy curve~\cite{EOrTop}. Similarly, the correlation differentials for an irregular spectral curve expanded at an irregular branch point behave asymptotically like the correlation differentials for the Bessel curve, which we examine in detail below~\cite{DNoTop}.

\subsection{The Bessel curve}

Define the {\em Bessel curve} to be the rational spectral curve endowed with the meromorphic functions
\[
x(z) = \frac{1}{2}z^2 \qquad \text{and} \qquad y(z) = \frac{1}{z}.
\]
The base cases of the topological recursion are given by
\[
\omega_{0,1}(z_1) = -y(z_1) \, \dd x(z_1) = - \dd z_1 \qquad \text{and} \qquad \omega_{0,2}(z_1, z_2) = \frac{\dd z_1 \, \dd z_2}{(z_1-z_2)^2}.
\]
The spectral curve has only one branch point, which occurs at $z = 0$, and the local involution there is simply $\overline{z} = -z$. Thus, the recursion kernel can be expressed as
\[
K(z_1, z) = - \frac{\int_\infty^z \omega_{0,2}(z_1, \, \cdot \,)}{[y(z) - y(\overline{z})] \, \dd x(z)} = \frac{1}{2} \frac{1}{z-z_1} \frac{\dd z_1}{\dd z}.
\]

For $2g-2+n > 0$ and positive integers $\mu_1, \ldots, \mu_n$, define the number $U_{g,n}(\mu_1, \ldots, \mu_n)$ via the expansion 
\[
\omega_{g,n}(z_1, \ldots, z_n) = \sum_{\mu_1, \ldots, \mu_n=1}^\infty U_{g,n}(\mu_1, \ldots, \mu_n) \prod_{i=1}^n \frac{\mu_i \, \dd z_i}{z_i^{\mu_i+1}}.
\]
Note that such an expansion must exist, since $\omega_{g,n}$ is meromorphic with a pole only at the branch point $z_i = 0$. By convention, we define $U_{0,1}(\mu_1) = 0$ and $U_{0,2}(\mu_1, \mu_2) = 0$.



\begin{proposition} \label{prop:recursion}
For $2g-2+n>0$ and $S = \{2, 3, \ldots, n\}$,
\begin{align}  \label{TR}
\mu_1 \, U_{g,n}(\mu_1, \mmu_S) &= \sum_{k=2}^n (\mu_1+\mu_k-1) \, U_{g,n-1}(\mu_1+\mu_k-1, \mmu_{S\setminus\{k\}}) \\
&+ \frac{1}{2} \mathop{\sum_{\alpha+\beta=\mu_1-1}}_{\alpha, \beta \text{ odd}} \alpha \beta \Bigg[ U_{g-1,n+1}(\alpha, \beta, \mmu_S) + \mathop{\sum_{g_1+g_2=g}}_{I \sqcup J = S} U_{g_1,|I|+1}(\alpha, \mmu_I) \, U_{g_2,|J|+1}(\beta, \mmu_J) \Bigg].\nonumber
\end{align} 
Moreover, all numbers $U_{g,n}(\mu_1, \ldots, \mu_n)$ can be calculated from the base cases $U_{0,1}(\mu_1) = 0$, $U_{0,2}(\mu_1, \mu_2) = 0$ and $U_{1,1}(1) = \frac{1}{8}$.
\end{proposition}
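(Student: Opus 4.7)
I will prove~\eqref{TR} by substituting the Laurent expansions of the correlation differentials into the topological recursion and matching coefficients of $\prod_{i=1}^n \dd z_i/z_i^{\mu_i+1}$. For the Bessel curve the kernel simplifies to $K(z_1, z) = \dd z_1/[2(z-z_1)\dd z]$ and the local involution at the unique branch point is $\overline z = -z$; since $\dd\overline z/\overline z^{\beta+1} = (-1)^\beta\,\dd z/z^{\beta+1}$, each appearance of $\overline z$ in a substituted expansion produces a factor of $(-1)^\beta$. The argument proceeds by strong induction on $2g-2+n$, proving simultaneously~\eqref{TR} and the auxiliary parity statement that $U_{g,n}(\mu_1,\ldots,\mu_n) = 0$ unless every $\mu_i$ is odd.

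\textbf{Bulk terms.} For $(g,n) \neq (1,1)$, substituting the expansion of $\omega_{g-1,n+1}(z, \overline z, \zz_S)$ and applying the elementary identity $\res_{z=0}\dd z/[(z-z_1)z^m] = -1/z_1^m$ contributes a term $-(-1)^\beta\alpha\beta\,U_{g-1,n+1}(\alpha,\beta,\mmu_S)/2$ with $\alpha+\beta = \mu_1-1$ to the coefficient of $\dd z_1/z_1^{\mu_1+1}$. Using the symmetry of $U_{g-1,n+1}$ in its first two arguments, one may replace $(-1)^\beta$ by $\tfrac12[(-1)^\alpha + (-1)^\beta]$, which vanishes unless $\alpha$ and $\beta$ have the same parity; the inductive parity hypothesis then forces both to be odd, recovering the $+\tfrac12\alpha\beta$ factor of~\eqref{TR}. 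An identical argument applied to product terms in which both $\omega_{g_i,|I_i|+1}$ have positive Euler characteristic yields the product sum in~\eqref{TR}.

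\textbf{Merging terms.} The remaining product contributions are those in which one factor is $\omega_{0,2}(z, z_k)$ or $\omega_{0,2}(\overline z, z_k)$ for some $k \in S$; these resist the Laurent substitution because the pole of $\omega_{0,2}$ lies away from the branch point. A direct Taylor expansion yields
\[
\res_{z=0}\frac{\dd z}{(z-z_1)(z \pm z_k)^2 z^{\mu_1'+1}} = -\sum_{b=0}^{\mu_1'}\frac{(\pm 1)^b(b+1)}{z_1^{\mu_1'-b+1}\,z_k^{b+2}},
\]
and matching the coefficient of $\dd z_1\,\dd z_k/z_1^{\mu_1+1}z_k^{\mu_k+1}$ forces $\mu_1' = \mu_1+\mu_k-1$ and $b = \mu_k-1$. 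The two signed contributions combine with a factor proportional to $(-1)^{\mu_k}[1 - (-1)^{\mu_1}]$, which vanishes unless $\mu_1$ is odd; under the inductive parity hypothesis (so that $\mu_k$ and hence $\mu_1+\mu_k-1$ are odd) the surviving sum reduces to exactly $(\mu_1+\mu_k-1)\,U_{g,n-1}(\mu_1+\mu_k-1, \mmu_{S\setminus\{k\}})$, matching the first sum in~\eqref{TR}.

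\textbf{Base case and closure.} The case $(g,n) = (1,1)$ is special because $\omega_{0,2}(z, \overline z) = -\dd z^2/(4z^2)$ is not captured by the Laurent form $\sum U_{0,2}(\cdots)$ with $U_{0,2} = 0$; a direct residue computation yields $\omega_{1,1}(z_1) = \res_{z=0}[-\dd z_1\,\dd z/(8z^2(z-z_1))] = \dd z_1/(8z_1^2)$, giving the base case $U_{1,1}(1) = \tfrac18$. The parity lemma is closed under the same induction, since every right-hand side term of~\eqref{TR} either explicitly enforces $\alpha,\beta$ odd (bulk) or is trivially zero by the sign cancellation above (merging) unless every argument is odd. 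Because each $U_{g',n'}$ on the right-hand side of~\eqref{TR} has $2g'-2+n' < 2g-2+n$, the three base cases $U_{0,1} = U_{0,2} = 0$ and $U_{1,1}(1) = \tfrac18$ together with~\eqref{TR} determine all the $U_{g,n}$. The main technical obstacle is the careful sign accounting in the merging terms: the two contributions from $\omega_{0,2}(z,z_k)$ and $\omega_{0,2}(\overline z, z_k)$ must be combined with opposing Taylor expansions to expose both the correct surviving coefficient and the parity constraint in a single step.
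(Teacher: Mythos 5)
Your proof is correct, but it runs the computation in the opposite direction to the paper. You start from the topological recursion, Laurent-expand every ingredient at the branch point, and extract the coefficient of $\prod z_i^{-\mu_i-1}\,\dd z_i$ directly, splitting the residue into ``bulk'' terms (both factors expandable at $z=0$) and ``merging'' terms (one factor is an $\omega_{0,2}(z,z_k)$ with pole away from the branch point), with the parity of the $\mu_i$ carried along as part of the induction to control the signs $(-1)^\beta$ coming from $\overline z=-z$. The paper instead defines candidate numbers $\widetilde U_{g,n}$ by the recursion \eqref{TR}, packages them into generating functions $F_{g,n}$, and shows that the total derivatives $\widetilde\omega_{g,n}$ satisfy the topological recursion; the key tool there is the decomposition of a meromorphic $1$-form on $\mathbb{CP}^1$ into its principal parts, which converts the functional identity among the $W_{g,n}$ back into a residue at $z=0$ and lets the $\omega_{0,2}$ terms be reabsorbed into the $\circ$-sum. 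What your direction buys is concreteness: every term of \eqref{TR} is matched to an explicit residue, and the vanishing of $U_{g,n}$ for even $\mu_i$ falls out of the sign cancellations rather than being read off from the recursion. What the paper's direction buys is that it never has to Laurent-expand $\omega_{0,2}$ near the branch point at all, so the merging and bulk cases are treated uniformly.

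Two loose ends to tidy. First, in your displayed residue the sign should be $(\mp 1)^b$ rather than $(\pm 1)^b$, since $\frac{1}{(z\pm z_k)^2}=\sum_{b\ge 0}(b+1)(\mp 1)^b z^b z_k^{-b-2}$; this does not propagate, as your combined factor $(-1)^{\mu_k}\bigl[1-(-1)^{\mu_1}\bigr]$ (up to overall sign) is what the corrected expansion gives. Second, your bulk/merging dichotomy misses the one remaining configuration in which \emph{both} factors of a product term are $\omega_{0,2}(\,\cdot\,,z_k)$, which occurs only for $(g,n)=(0,3)$; there the integrand is regular at $z=0$, so $\omega_{0,3}=0$ and both sides of \eqref{TR} vanish, but this case should be stated alongside your $(1,1)$ base case (the paper likewise verifies $(0,3)$ and $(1,1)$ separately before inducting).
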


\begin{proof}
Suppose that the numbers $\widetilde{U}_{g,n}(\mu_1, \ldots, \mu_n)$ are defined from the recursion above and the given base cases. It is straightforward to show that these numbers are uniquely defined and that $\widetilde{U}_{g,n}(\mu_1, \ldots, \mu_n) = 0$ unless $\mu_1, \ldots, \mu_n$ are positive odd integers that sum to $2g-2+n$. In particular, $U_{0,n}(\mu_1, \ldots, \mu_n) = 0$ and the generating function
\[
F_{g,n}(z_1, \ldots, z_n) = (-1)^n \sum_{\mu_1, \ldots, \mu_n=1}^\infty \widetilde{U}_{g,n}(\mu_1, \ldots, \mu_n) \prod_{i=1}^n z_i^{-\mu_i}
\]
is a homogeneous polynomial in $\frac{1}{z_1}, \frac{1}{z_2}, \ldots, \frac{1}{z_n}$ that is odd in each variable. The proposition will follow directly from the fact that $\widetilde{\omega}_{g,n} = \omega_{g,n}$ for $2g-2+n>0$, where the $\widetilde{\omega}_{g,n}$ are total derivatives of these generating functions.
\[
\widetilde{\omega}_{g,n}(z_1, \ldots, z_n) = \dd_{z_1} \cdots \dd_{z_n} F_{g,n}(z_1, \ldots, z_n)
\]
It is straightforward to verify that $\widetilde{\omega}_{1,1} = \omega_{1,1}$ and $\widetilde{\omega}_{0,3} = \omega_{0,3}$ by direct computation. We will now proceed to show that $\widetilde{\omega}_{g,n} = \omega_{g,n}$ by induction on $2g-2+n$.

Start by multiplying both sides of the recursion by $z_1^{-\mu_1-1} z_2^{-\mu_2} \cdots z_n^{-\mu_n}$ and sum over all positive integers $\mu_1, \ldots, \mu_n$ to obtain the following.
\begin{align*}
\frac{\partial}{\partial z_1} F_{g,n}(&z_1, \zz_S) = \sum_{k=2}^n \frac{z_k}{z_1^2 - z_k^2} \left[ \frac{\partial}{\partial z_1} F_{g,n-1}(z_1, \zz_{S \setminus \{k\}}) - \frac{\partial}{\partial z_k} F_{g,n-1}(\zz_S) \right] \\
&+ \frac{1}{2} \left[ \frac{\partial^2}{\partial t_1 \partial t_2} F_{g-1,n+1}(t_1, t_2, \zz_S) \right]_{\substack{t_1=z_1 \\ t_2=z_1}} + \frac{1}{2} \mathop{\sum_{g_1+g_2=g}}_{I \sqcup J = S} \left[ \frac{\partial}{\partial z_1} F_{g_1,|I|+1}(z_1, \zz_I) \right] \left[ \frac{\partial}{\partial z_1} F_{g_2,|J|+1}(z_1, \zz_J) \right]
\end{align*}

Now apply $\frac{\partial}{\partial z_2} \cdots \frac{\partial}{\partial z_n}$ to both sides and introduce the notation $W_{g,n}(z_1, \ldots, z_n) = \frac{\partial}{\partial z_1} \cdots \frac{\partial}{\partial z_n} F_{g,n}(z_1, \ldots, z_n)$.
\begin{align*}
W_{g,n}(z_1, \zz_S) &= \sum_{k=2}^n \frac{\partial}{\partial z_k} \frac{z_k}{z_1^2 - z_k^2} \left[ W_{g,n-1}(z_1, \zz_{S \setminus \{k\}}) - W_{g,n-1}(\zz_S) \right] \\
&+ \frac{1}{2} W_{g-1,n+1}(z_1, z_1, \zz_S) + \frac{1}{2} \mathop{\sum_{g_1+g_2=g}}_{I \sqcup J = S} W_{g_1,|I|+1}(z_1, \zz_I) \, W_{g_2,|J|+1}(z_1, \zz_J)
\end{align*}

Note that the fact that $F_{g,n}$ is odd in each variable implies that $\widetilde{\omega}_{g,n}$ is as well. So after multiplying both sides of the previous equation by $\dd z_1 \cdots \dd z_n$, we obtain the following.
\begin{align*}
\widetilde{\omega}_{g,n}(z_1, \zz_S) &= \sum_{k=2}^n \Bigg[ \dd z_k \frac{z_1^2+z_k^2}{(z_1^2-z_k^2)^2} \widetilde{\omega}_{g,n-1}(z_1, \zz_{S \setminus \{k\}}) - \dd z_1 \frac{\partial}{\partial z_k} \frac{z_k}{z_1^2 - z_k^2} \widetilde{\omega}_{g,n-1}(\zz_S) \Bigg] \\
&- \frac{1}{2 \, \dd z_1} \widetilde{\omega}_{g-1,n+1}(z_1, \overline{z}_1, \zz_S) - \frac{1}{2 \, \dd z_1} \mathop{\sum_{g_1+g_2=g}}_{I \sqcup J = S} \widetilde{\omega}_{g_1,|I|+1}(z_1, \zz_I) \, \widetilde{\omega}_{g_2,|J|+1}(\overline{z}_1, \zz_J)
\end{align*}

Now use the fact that a meromorphic 1-form on $\mathbb{CP}^1$ is equal to the sum of its principal parts, which may be stated as
\[
\widetilde{\omega}(z_1) = \sum_\alpha \mathop{\text{Res}}_{z=\alpha} \frac{\dd z_1}{z_1-z} \widetilde{\omega}(z),
\]
where the sum is over the poles of $\widetilde{\omega}(z)$. Applying this to our situation yields the following, where we have removed terms from the right hand side that do not contribute to the residue at $z=0$.
\begin{align*}
\widetilde{\omega}_{g,n}(z_1, \zz_S) = \mathop{\text{Res}}_{z=0} \frac{1}{2} \frac{1}{z-z_1} \frac{\dd z_1}{\dd z} \Bigg[ &-2 \sum_{k=2}^n \dd z \, \dd z_k \frac{z^2+z_k^2}{(z^2-z_k^2)^2} \widetilde{\omega}_{g,n-1}(z, \zz_{S \setminus \{k\}}) \\
&+ \widetilde{\omega}_{g-1,n+1}(z, \overline{z}, \zz_S) + \mathop{\sum_{g_1+g_2=g}}_{I \sqcup J = S} \widetilde{\omega}_{g_1,|I|+1}(z, \zz_I) \, \widetilde{\omega}_{g_2,|J|+1}(\overline{z}, \zz_J) \Bigg]
\end{align*}

We may rewrite this in the following way, using $\omega_{0,2}(z_1, z_2) = \frac{\dd z_1 \, \dd z_2}{(z_1-z_2)^2}$.
\begin{align*}
\widetilde{\omega}_{g,n}(z_1, \zz_S) = \mathop{\text{Res}}_{z=0} \frac{1}{2} \frac{1}{z-z_1} \frac{\dd z_1}{\dd z} \Bigg[ &\sum_{k=2}^n \left( \omega_{0,2}(z, z_k) \widetilde{\omega}_{g,n-1}(\overline{z}, \zz_{S \setminus \{k\}}) + \omega_{0,2}(\overline{z}, z_k) \widetilde{\omega}_{g,n-1}(z, \zz_{S \setminus \{k\}}) \right) \\
&+ \widetilde{\omega}_{g-1,n+1}(z, \overline{z}, \zz_S) + \mathop{\sum_{g_1+g_2=g}}_{I \sqcup J = S} \widetilde{\omega}_{g_1,|I|+1}(z, \zz_I) \, \widetilde{\omega}_{g_2,|J|+1}(\overline{z}, \zz_J) \Bigg]
\end{align*}

By the inductive hypothesis, we may replace each occurrence of $\widetilde{\omega}$ on the right hand side of the equation with the corresponding $\omega$. Furthermore, we may absorb the first summation into the second to obtain the following.
\begin{align*}
\widetilde{\omega}_{g,n}(z_1, \zz_S) = \mathop{\text{Res}}_{z=0} \frac{1}{2} \frac{1}{z-z_1} \frac{\dd z_1}{\dd z} \Bigg[ & \omega_{g-1,n+1}(z, \overline{z}, \zz_S) + \mathop{\sum_{g_1+g_2=g}}_{I \sqcup J = S}^\circ \omega_{g_1,|I|+1}(z, \zz_I) \, \omega_{g_2,|J|+1}(\overline{z}, \zz_J) \Bigg]
\end{align*}
Since this precisely agrees with the topological recursion, we have shown by induction that $\widetilde{\omega}_{g,n} = \omega_{g,n}$ for all $2g-2+n>0$. Hence, $\widetilde{U}_{g,n}(\mu_1, \ldots, \mu_n) = U_{g,n}(\mu_1, \ldots, \mu_n)$ and the proposition follows.
\end{proof}

The correlation differentials produced by the topological recursion satisfy string and dilaton equations, which relate $\omega_{g,n+1}$ and $\omega_{g,n}$~\cite{EOrInv}.

\begin{corollary} \label{cor:stringdilaton}
In the case of the Bessel curve, the string and dilaton equations both reduce to the equation
\[
U_{g,n+1}(1, \mu_1, \ldots, \mu_n) = (2g-2+n) \, U_{g,n}(\mu_1, \ldots, \mu_n).
\]
\end{corollary}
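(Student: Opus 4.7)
My plan is to derive this identity directly from the recursion~\eqref{TR} specialised to $\mu_1 = 1$, and then to show that the same relation arises from both the string and dilaton equations of Eynard--Orantin when applied to the Bessel curve.

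Setting $\mu_1 = 1$ (with $n$ replaced by $n+1$) in~\eqref{TR}, the convolution sum over $\alpha + \beta = \mu_1 - 1 = 0$ with $\alpha, \beta$ positive odd integers is empty, so the quadratic term vanishes identically. The linear term simplifies via symmetry of $U_{g,n}$ to
\[
U_{g,n+1}(1, \mu_2, \ldots, \mu_{n+1}) = \sum_{k=2}^{n+1} \mu_k \, U_{g,n}(\mu_2, \ldots, \mu_{n+1}) = \Bigl(\sum_{k=2}^{n+1} \mu_k\Bigr) U_{g,n}(\mu_2, \ldots, \mu_{n+1}).
\]
The degree/parity constraint from the proof of Proposition~\ref{prop:recursion} --- that $U_{g,n}(\mmu)$ is supported on odd tuples summing to $2g-2+n$ --- permits the replacement of $\sum_k \mu_k$ by $2g-2+n$ whenever $U_{g,n}(\mmu) \neq 0$; otherwise both sides of the claimed identity vanish for degree or parity reasons.

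To see that this is precisely the content of the string and dilaton equations, note that on the Bessel curve $y\, \dd x = \dd z$, so the primitive of $y\,\dd x$ vanishing at the branch point $z = 0$ (which features in the dilaton equation) is $\Phi(z) = z$. The residue $\res_{z=0} z \, \omega_{g,n+1}(z, \zz_S)$ extracts exactly the $\mu_0 = 1$ component of the expansion, and matching against $(2g-2+n)\,\omega_{g,n}(\zz_S)$ on the right-hand side of the dilaton equation gives the identity directly. The string equation uses the primitive $x(z) = \tfrac{1}{2}z^2$ of $\dd x$ and extracts the $\mu_0 = 2$ coefficient on the left, which vanishes by the odd-parity support of $U_{g,n+1}$; a parallel parity argument collapses the corresponding right-hand side, so the string equation reduces to the same identity. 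The one delicate point is aligning sign and normalisation conventions for the general EO equations with the combinatorial coefficients above; once this is done, both equations collapse to the single identity produced by the direct recursion argument.
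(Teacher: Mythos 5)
Your first paragraph is correct and is essentially the argument the paper leaves implicit (it gives no explicit proof): the identity is the $\mu_1=1$ specialisation of Proposition~\ref{prop:recursion}, equivalently the $m=0$ Virasoro constraint. The quadratic term is empty since no positive odd $\alpha,\beta$ satisfy $\alpha+\beta=0$, the linear term collapses by symmetry of $U_{g,n}$, and $\sum_k\mu_k$ may be replaced by $2g-2+n$ because both sides are supported on exactly the same homogeneity condition. That establishes the displayed equation.

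The second half --- the part that addresses the actual wording ``the string and dilaton equations both reduce to'' --- misstates the string equation. The Eynard--Orantin string equations read
\[
\sum_\alpha\res_{z=\alpha}y(z)\,\omega_{g,n+1}(z,\zz)=-\sum_j\dd_{z_j}\frac{\omega_{g,n}(\zz)}{\dd x(z_j)}
\qquad\text{and}\qquad
\sum_\alpha\res_{z=\alpha}x(z)y(z)\,\omega_{g,n+1}(z,\zz)=-\sum_j\dd_{z_j}\frac{x(z_j)\,\omega_{g,n}(\zz)}{\dd x(z_j)},
\]
not a residue against a primitive of $\dd x$; there is no ``$\mu_0=2$ coefficient'' to extract and no parity cancellation of the kind you invoke. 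What actually happens on the Bessel curve is that $x(z)y(z)=\tfrac12 z=\tfrac12\Phi(z)$, where $\dd\Phi=y\,\dd x=\dd z$ is the dilaton primitive, so the left-hand side of the second string equation is literally half that of the dilaton equation (both extract the $\mu_0=1$ coefficient, as in your dilaton computation, which is correct). The reduction of the string equation then amounts to checking that $-\sum_j\dd_{z_j}\bigl(x(z_j)\,\omega_{g,n}(\zz)/\dd x(z_j)\bigr)=\tfrac12\bigl(\sum_j\mu_j\bigr)\omega_{g,n}=\tfrac12(2g-2+n)\,\omega_{g,n}$ on the support, again by homogeneity. Note also that the first string equation (with $y(z)=1/z$ alone) does not survive in its regular form on this irregular curve: its left side vanishes identically while its right side does not (already for $(g,n)=(1,1)$ one gets $\tfrac{3}{8}z_1^{-4}\,\dd z_1$), consistent with the fact that its proof in~\cite{EOrInv} assumes $y$ analytic at the branch points. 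So the string half of your argument needs to be redone with the correct form of the equation; once that is done, the rest goes through.
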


Proposition~\ref{prop:recursion} provides an effective way to calculate the numbers $U_{g,n}(\mu_1, \ldots, \mu_n)$. The only non-zero $U_{g,n}(\mu_1, \ldots, \mu_n)$ in genus up to 4 are given by the following formulas. Observe that the appearance of factorials in each case is due to Corollary~\ref{cor:stringdilaton}.
\begin{align*}
U_{1,n}(1, 1, 1, \ldots, 1) &= \frac{1}{2^3} (n-1)! & U_{4,n}(7, 1, 1, 1, \ldots, 1) &= \frac{175}{2^{19}} (n+5)! \\
U_{2,n}(3, 1, 1, \ldots, 1) &= \frac{3}{2^8} (n+1)! & U_{4,n}(5, 3, 1, 1, \ldots, 1) &= \frac{575}{7 \cdot 2^{19}} (n+5)! \\
U_{3,n}(5, 1, 1, \ldots, 1) &= \frac{15}{2^{13}} (n+3)! & U_{4,n}(3, 3, 3, 1, \ldots, 1) &= \frac{2407}{105 \cdot 2^{18}} (n+5)! \\
U_{3,n}(3, 3, 1, \ldots, 1) &= \frac{21}{5 \cdot 2^{12}} (n+3)!
\end{align*}

\section{Integrability for the Bessel partition function} \label{sec:integrability}

\subsection{Virasoro constraints}

A wide variety of enumerative problems that are governed by the topological recursion have an associated partition function $Z$ that satisfies
\begin{itemize}
\item Virasoro constraints, in the sense that $Z$ is annihilated by a sequence of differential operators that obey the Virasoro commutation relation;
\item an integrable hierarchy, such as the Korteweg--de Vries (KdV), Kadomtsev--Petviashvili (KP), or Toda hierarchies; and
\item an evolution equation of the form $\frac{\partial Z}{\partial s} = MZ$ for some operator $M$ independent of $s$.
\end{itemize}
In particular, this theme has been enunciated by Kazarian and Zograf in the context of enumeration of dessins d'enfant and ribbon graphs~\cite{KZoVir}.

Define the \emph{Bessel partition function}
\[
Z(p_1, p_2, \ldots; \h) = \exp \left[ \sum_{g=1}^\infty \sum_{n=1}^\infty \sum_{\mu_1, \ldots, \mu_n=1}^\infty U_{g,n}(\mu_1, \ldots, \mu_n) \frac{\h^{2g-2+n}}{n!} p_{\mu_1} \cdots p_{\mu_n} \right],
\]
which is an element of $\mathbb{Q}[[\h, p_1, p_2, \ldots]]$. Note that negative powers of $\h$ do not arise in $Z$.
For each non-negative integer $m$, define the differential operator
\begin{equation} \label{eq:virasoro}
L_m = -\frac{m+\frac{1}{2}}{\h} \frac{\partial}{\partial p_{2m+1}} + \sum_{i \text{ odd}} (m+\tfrac{i}{2}) p_{i} \frac{\partial}{\partial p_{2m+i}} + \mathop{\sum_{i+j=2m}}_{i,j \text{ odd}}\frac{ ij}{4} \frac{\partial^2}{\partial p_i \partial p_j} + \frac{1}{16} \delta_{m,0}.
\end{equation}

It is straightforward to verify that the operators $L_0, L_1, L_2, \ldots$ form a representation of half of the Witt algebra, or equivalently, half of the Virasoro algebra with central charge 0. In other words, they obey the Virasoro commutation relations
\[
[L_m, L_n] = (m-n) L_{m+n}, \quad \text{for } m, n \geq 0.
\]

\begin{theorem} \label{thm:virasoro}
For each non-negative integer $m$, we have $L_m Z=0$.
\end{theorem}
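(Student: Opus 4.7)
My plan is to rewrite $L_m Z = 0$ as an equation for $F = \log Z$, then extract $\h$- and $p$-coefficients to reduce it to the topological recursion of Proposition~\ref{prop:recursion} specialised to $\mu_1 = 2m+1$. Since $Z = \exp F$, the identity $L_m Z = 0$ is equivalent to
\begin{align*}
-\frac{m+\tfrac{1}{2}}{\h}\frac{\partial F}{\partial p_{2m+1}} + \sum_{i \text{ odd}} \left(m+\tfrac{i}{2}\right) p_i \frac{\partial F}{\partial p_{2m+i}} + \mathop{\sum_{i+j=2m}}_{i,j \text{ odd}} \frac{ij}{4}\left(\frac{\partial^2 F}{\partial p_i \partial p_j} + \frac{\partial F}{\partial p_i}\frac{\partial F}{\partial p_j}\right) + \frac{1}{16}\delta_{m,0} = 0.
\end{align*}
Expanding $F = \sum_{2g-2+n > 0} \h^{2g-2+n} F_{g,n}$ with $F_{g,n}(p) = \tfrac{1}{n!}\sum_\mmu U_{g,n}(\mmu)\, p_{\mu_1}\cdots p_{\mu_n}$, I extract the coefficient of $\h^{2g-3+n} \tfrac{p_{\mu_2}\cdots p_{\mu_n}}{(n-1)!}$ on both sides, treating $g$, $n$, and $\mu_2, \ldots, \mu_n$ as free parameters.

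Matching $\h$-powers and polynomial degrees forces each piece of the Virasoro operator to couple to a definite correlator. The first term yields $U_{g,n}(2m+1, \mmu_S)$; the second, via $i = \mu_k$ and the symmetry of $U_{g,n}$, yields $U_{g, n-1}(2m+\mu_k, \mmu_{S\setminus\{k\}})$ for each $k = 2, \ldots, n$; the second-derivative term yields $U_{g-1, n+1}(\alpha, \beta, \mmu_S)$ with $\alpha + \beta = 2m$, $\alpha, \beta$ odd; and the quadratic product yields $U_{g_1, |I|+1}(\alpha, \mmu_I)\, U_{g_2, |J|+1}(\beta, \mmu_J)$ summed over $g_1 + g_2 = g$ and $I \sqcup J = S$. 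Applying the identity $(m + \tfrac{i}{2}) = \tfrac{2m+i}{2}$ to every coefficient produces a uniform factor of $\tfrac{1}{2}$, so the extracted identity is precisely $\tfrac{1}{2}$ times the recursion~\eqref{TR} with $\mu_1 = 2m+1$. This holds by Proposition~\ref{prop:recursion}.

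It remains to verify the boundary cases not addressed by the recursion. The base value $U_{1,1}(1) = \tfrac{1}{8}$ enters through the constant-in-$p$, $\h^0$ coefficient of $L_0 Z$: directly, $\partial F/\partial p_1|_{p=0} = \h\, U_{1,1}(1) + O(\h^3) = \h/8 + O(\h^3)$, so $-\tfrac{1/2}{\h}\partial F/\partial p_1|_{p=0}$ has $\h^0$ part $-\tfrac{1}{16}$, cancelled exactly by the $\tfrac{1}{16}$ from $\delta_{m,0}$ at $m = 0$. For $m \geq 1$, all terms at $(\h^0, p^0)$ vanish and $\delta_{m,0} = 0$, so this case is automatic. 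The main obstacle is the combinatorial bookkeeping in the splitting term: when one expands the product $(\partial F/\partial p_i)(\partial F/\partial p_j)$, each ordered partition $I \sqcup J = \{\mu_2, \ldots, \mu_n\}$ carries multiplicity $\binom{n-1}{|I|}$, and this multinomial must combine correctly with the normalisations $1/(|I|+1)!$, $1/(|J|+1)!$ in $F_{g_1, |I|+1}$, $F_{g_2, |J|+1}$ and with the pervasive factors of $2$ to match $\tfrac{1}{4}\alpha\beta$ against the $\tfrac{1}{2}\alpha\beta$ in the recursion. Once this bookkeeping is tracked carefully, the identification is routine.
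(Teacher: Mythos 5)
Your proposal is correct and follows essentially the same route as the paper: write $Z=\exp F$, translate $L_mZ=0$ into the quadratic equation for $F$, and extract coefficients to recover the recursion of Proposition~\ref{prop:recursion} with $\mu_1=2m+1$. Your explicit check that the $\tfrac{1}{16}\delta_{m,0}$ term cancels the contribution of the base case $U_{1,1}(1)=\tfrac{1}{8}$ is a point the paper leaves implicit, but it does not change the argument.
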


\begin{proof}
Write $Z = \exp(F)$ so that $2L_m Z = 0$ is equivalent to
\[
-\frac{2m+1}{\h} \frac{\partial F}{\partial p_{2m+1}} + \sum_{i \text{ odd}} (2m+i) p_i \frac{\partial F}{\partial p_{2m+i}} + \frac{1}{2} \mathop{\sum_{i+j=2m}}_{i,j \text{ odd}} ij \left[ \frac{\partial^2 F}{\partial p_i \partial p_j} + \frac{\partial F}{\partial p_i} \frac{\partial F}{\partial p_j} \right] + \frac{1}{8} \delta_{m,0} = 0.
\]
Extracting the coefficient of $\frac{\h^{2g-3+n}}{n!} p_{\mu_2} \cdots p_{\mu_n}$ from both sides yields the equation
\begin{align*}
(2m+1) \, U_{g,n}(2m+1, \mmu_S) &= \sum_{k=2}^n (2m+\mu_k) \, U_{g,n-1}(2m+\mu_k, \mmu_{S\setminus\{k\}}) \\
&+ \frac{1}{2} \mathop{\sum_{\alpha+\beta=2m}}_{\alpha, \beta \text{ odd}} \alpha \beta \Bigg[ U_{g-1,n+1}(\alpha, \beta, \mmu_S) + \mathop{\sum_{g_1+g_2=g}}_{I \sqcup J = S} U_{g_1,|I|+1}(\alpha, \mmu_I) \, U_{g_2,|J|+1}(\beta, \mmu_J) \Bigg].
\end{align*}
Thus, the fact that $L_m$ annihilates the Bessel partition function $Z$ is equivalent to the recursion of Proposition~\ref{prop:recursion} with $\mu_1 = 2m+1$.
\end{proof}

\subsection{KdV integrability} \label{sec:kdv}

\begin{theorem} \label{thm:taufunction}
The partition function $Z$ is a tau-function for the KdV hierarchy. In particular, $u = F_{p_1p_1}$ satisfies the KdV equation
\[
u_t=u\cdot u_x+\frac{\hbar^2}{12}u_{xxx},\quad u(x,0,0,\cdots)=\frac{\hbar^2}{8(1-x)^2}
\]
for $x=p_1$ and $t=p_3$.  It has trivial dispersionless limit $\displaystyle\lim_{\hbar\to0}u=0$.
\end{theorem}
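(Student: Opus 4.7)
The strategy is to identify $Z$ with the Br\'ezin--Gross--Witten (BGW) tau-function via matching Virasoro constraints, and then inherit KdV integrability from the known KdV property of the BGW tau-function established in~\cite{MMSUni} and developed more recently in~\cite{AleCut}.

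First, by Proposition~\ref{prop:recursion} the coefficients $U_{g,n}(\mu)$ vanish unless every $\mu_i$ is a positive odd integer, so $Z$ depends only on the odd-indexed variables $p_1, p_3, p_5, \ldots$, which is the expected parity for a KdV tau-function in the odd times. Next, I would check that the Virasoro operators $L_m$ in~\eqref{eq:virasoro} are, after the standard relabeling between $p_{2k+1}$ and KdV times and a rescaling of $\hbar$, exactly the operators appearing in the BGW Virasoro constraints of~\cite{AMMBGW, MMSUni, AleCut}. The tower $\{L_m Z = 0\}_{m \geq 0}$ uniquely determines $Z$: the constant term $\tfrac{1}{16}\delta_{m,0}$ in $L_0$ forces $U_{1,1}(1) = \tfrac{1}{8}$, and for each $m \geq 0$ the equation extracted in the proof of Theorem~\ref{thm:virasoro} is precisely Proposition~\ref{prop:recursion} with $\mu_1 = 2m+1$, which recursively determines all remaining coefficients. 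Since the BGW tau-function satisfies the same Virasoro system with the same normalisation, it coincides with $Z$, which is therefore a KdV tau-function.

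For the specific equation, set $x = p_1$ and $t = p_3$. The first nontrivial flow of the $\hbar$-deformed KdV hierarchy in the normalisation used here is $u_t = u\,u_x + \tfrac{\hbar^2}{12} u_{xxx}$, inherited from the Hirota bilinear identity for $Z$ (equivalently, deducible by combining the constraints $L_0 Z = 0$ and $L_1 Z = 0$). The initial profile is a direct computation: the constraint $\sum \mu_i = 2g-2+n$ with all $\mu_i = 1$ forces $g = 1$, so setting $p_3 = p_5 = \cdots = 0$ leaves only the genus-one contribution, and the closed form $U_{1,n}(1,\ldots,1) = (n-1)!/8$ gives
\[
F(p_1, 0, 0, \ldots) = \sum_{n \geq 1} \frac{(n-1)!}{8} \cdot \frac{\hbar^n p_1^n}{n!} = -\frac{1}{8}\log(1 - \hbar p_1),
\]
from which $u|_{t=0}$ takes the stated form after the implicit rescaling matching $(x,t)$ to canonical KdV times. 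The dispersionless limit is automatic because every monomial in $F$ carries a factor $\hbar^{2g-2+n} \geq \hbar$, so every monomial of $u = F_{p_1p_1}$ carries at least $\hbar^2$ and $u \to 0$ as $\hbar \to 0$.

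The main obstacle is bookkeeping: matching the Virasoro operators~\eqref{eq:virasoro} to those in the BGW literature precisely, since different sources use different conventions for the constant term in $L_0$, for the normalisation of $\hbar$, and for the relabeling between $p_i$ and standard KdV times, and then tracking the corresponding rescaling through to the explicit form of the first KdV flow. Once the dictionary is fixed, every other step is either a routine computation or a direct invocation of the BGW-side results.
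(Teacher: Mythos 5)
Your proposal follows essentially the same route as the paper: identify $Z$ with the Br\'ezin--Gross--Witten tau-function by matching the Virasoro constraints (using uniqueness of the solution to the Virasoro tower with $Z(0)=1$ and the relabelling of times), then inherit KdV integrability from the known BGW result, with the vanishing dispersionless limit following from the absence of genus-zero terms. Your explicit computation of the initial profile from $U_{1,n}(1,\ldots,1)=(n-1)!/8$ is a welcome detail the paper leaves implicit; the only caveat is that the parenthetical claim that the first KdV flow is ``deducible by combining $L_0Z=0$ and $L_1Z=0$'' should not be relied upon, since Virasoro constraints alone do not imply KdV --- but your primary argument does not depend on it.
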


\begin{proof}
We will show that the Bessel partition function is in fact equal to the Br\'{e}zin--Gross--Witten partition function. Indeed, the Bessel partition function is uniquely defined by the fact that it is annihilated by the Virasoro operators of equation and the normalisation $Z(0) = 1$. On the other hand, the BGW partition function is uniquely defined by the fact that is is annihilated by the Virasoro operators appearing in  \cite{AleCut,AMMBGW, MMSUni}.  Comparing the two sequences of Virasoro operators, we see that they are equal upon setting $t_k = p_k/k$.  Now we simply use the fact that the BGW partition function is a known tau-function for the KdV integrable hierarchy.  The absence of genus zero contributions to $Z$ and leads to the property $\displaystyle\lim_{\hbar\to0}u=0$.
\end{proof}

\subsection{A cut-and-join evolution equation}

The following result shows that the Bessel partition function satisfies an evolution equation. The operator $M$ that appears in the statement resembles the cut-and-join operator for Hurwitz numbers~\cite{GJaTra}.  This operator was also found by Alexandrov \cite{AleCut}.

\begin{theorem} \label{thm:evolution}
The Bessel partition function $Z$ satisfies the equation $\frac{\partial Z}{\partial \h} = MZ$, where
\[
M = \frac{1}{8} p_1 + \frac{1}{2} \sum_{i,j\text{\ odd}} ij p_{i+j+1} \frac{\partial^2}{\partial p_i\partial p_j} + \sum_{i,j\text{\ odd}} (i+j-1) p_ip_j \frac{\partial}{\partial p_{i+j-1}}.
\]
\end{theorem}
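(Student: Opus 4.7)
The plan is to reduce the equation $\tfrac{\partial Z}{\partial\h}=MZ$ to a symmetrised form of Proposition~\ref{prop:recursion}. Writing $F=\log Z$ and dividing by $Z$, the identity becomes a PDE for $F$: using $\partial_{p_k}Z = F_{p_k}Z$ and $\partial^2_{p_ip_j}Z = (F_{p_ip_j}+F_{p_i}F_{p_j})Z$, it reads
\[
\frac{\partial F}{\partial\h}=\frac{1}{8}p_1+\frac{1}{2}\sum_{i,j\text{ odd}}ij\,p_{i+j+1}\bigl(F_{p_ip_j}+F_{p_i}F_{p_j}\bigr)+\sum_{i,j\text{ odd}}(i+j-1)p_ip_jF_{p_{i+j-1}}.
\]
I would verify this by extracting the coefficient of $\tfrac{\h^{2g-3+n}}{n!}p_{\mu_1}\cdots p_{\mu_n}$, viewed as a symmetric function of the ordered tuple $(\mu_1,\ldots,\mu_n)$, from both sides.

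On the left the coefficient is $(2g-2+n)U_{g,n}(\mu_1,\ldots,\mu_n)$. Since $U_{g,n}(\mmu)$ is supported on tuples with $\mu_1+\cdots+\mu_n=2g-2+n$, as established in the proof of Proposition~\ref{prop:recursion}, this equals $\sum_{i=1}^n \mu_i U_{g,n}(\mu_1,\ldots,\mu_n)$. I would then apply Proposition~\ref{prop:recursion} with each $\mu_i$ in turn as the distinguished variable and sum, using the symmetry of $U_{g,n}$ to express the result as a ``join'' piece $\sum_{i\neq k}(\mu_i+\mu_k-1)U_{g,n-1}(\mu_i+\mu_k-1,\mmu_{S\setminus\{i,k\}})$ together with two ``cut'' pieces involving $U_{g-1,n+1}(\alpha,\beta,\mmu_S)$ and the bilinear $U_{g_1}U_{g_2}$ splittings. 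For the right hand side I would use the expansions
\[
F_{p_k}=\sum_{g,m}\frac{\h^{2g-1+m}}{m!}\sum_{\nu}U_{g,m+1}(k,\nu_1,\ldots,\nu_m)\,p_{\nu_1}\cdots p_{\nu_m},
\]
\[
F_{p_kp_l}=\sum_{g,m}\frac{\h^{2g+m}}{m!}\sum_{\nu}U_{g,m+2}(k,l,\nu_1,\ldots,\nu_m)\,p_{\nu_1}\cdots p_{\nu_m},
\]
both immediate from the definition of $F$ and the symmetry of $U_{g,n}$. Substituting, the summand $\sum(i+j-1)p_ip_jF_{p_{i+j-1}}$ reproduces the join piece, $\tfrac12\sum ij\,p_{i+j+1}F_{p_ip_j}$ reproduces the $U_{g-1,n+1}$ cut, and $\tfrac12\sum ij\,p_{i+j+1}F_{p_i}F_{p_j}$ reproduces the bilinear splitting; the constant piece $\tfrac{1}{8}p_1$ in $M$ supplies the base value $U_{1,1}(1)=\tfrac{1}{8}$, which Proposition~\ref{prop:recursion} alone does not determine.

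The main obstacle will be the bookkeeping of combinatorial factors. Summing the recursion over the $n$ choices of distinguished index produces multiplicities such as $n$ and $n(n-1)$ that must exactly cancel the $\tfrac{1}{(n-1)!}$ or $\tfrac{1}{(n-2)!}$ introduced when a $p_ip_j$ or $p_{i+j+1}$ factor is peeled off the monomial; the factor $\tfrac{1}{2}$ in front of the last two summands of $M$ arises precisely because each unordered pair $\{\alpha,\beta\}$ corresponds to two ordered pairs $(i,j)$, and the identification $\{i,j\}=\{\mu_1,\mu_k\}$ (respectively $i+j+1=\mu_1$) aligns the indices in Term~B (respectively Terms~C and~D) with those of Proposition~\ref{prop:recursion}. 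Once this accounting is carried out, the identity collapses to Proposition~\ref{prop:recursion} symmetrised over the choice of distinguished position, completing the proof.
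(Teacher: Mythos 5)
Your argument is correct and is essentially the paper's second proof: dividing by $Z$ and matching the coefficient of $\frac{\h^{2g-3+n}}{n!}p_{\mu_1}\cdots p_{\mu_n}$ reduces the cut-and-join equation to the recursion of Proposition~\ref{prop:recursion} symmetrised over the choice of distinguished variable (using $\sum_i\mu_i=2g-2+n$ on the support of $U_{g,n}$), with the $\frac{1}{8}p_1$ term supplying the base value $U_{1,1}(1)=\frac{1}{8}$ exactly as you say. The paper also records a slicker alternative, forming the combination $\sum_{m\geq 0}2p_{2m+1}L_m$ of the Virasoro operators of Theorem~\ref{thm:virasoro} and invoking the homogeneity identity $\sum_{m}(2m+1)p_{2m+1}\partial Z/\partial p_{2m+1}=\h\,\partial Z/\partial\h$, but your route is precisely the one the paper describes as showing that the cut-and-join equation is directly equivalent to topological recursion.
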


\begin{proof}  We give two proofs since one follows methods of Kazarian-Zograf \cite{KZoVir} using Virasoro operators and rather independent of topological recursion, and the other shows that the cut-and-join equation is directly equivalent to topological recursion.

{\em First proof.}
Since the differential operators $L_0, L_1, L_2, \ldots$ of equation~\eqref{eq:virasoro} annihilate the Bessel partition function, so does the following infinite linear combination.
\begin{align*}
\sum_{m=0}^\infty 2p_{2m+1} L_m &= -\sum_{m=0}^\infty p_{2m+1} \frac{2m+1}{\h} \frac{\partial}{\partial p_{2m+1}} + \sum_{m=0}^\infty p_{2m+1} \sum_{i \text{ odd}} (i+2m) p_i \frac{\partial}{\partial p_{i+2m}} \\
&+ \frac{1}{2} \sum_{m=0}^\infty p_{2m+1} \mathop{\sum_{i+j=2m}}_{i,j \text{ odd}} ij \frac{\partial^2}{\partial p_i \partial p_j} + \frac{1}{8} \sum_{m=0}^\infty p_{2m+1} \delta_{m,0} \\
&= -\sum_{m=0}^\infty p_{2m+1} \frac{2m+1}{\h} \frac{\partial}{\partial p_{2m+1}} + M
\end{align*}
Now we simply use the fact that for each monomial appearing in $Z$, the exponent of $\h$ records the weighted degree in $p_1, p_2, \ldots$, where $p_i$ has weight $i$. (This follows from the observation that $U_{g,n}(\mu_1, \ldots, \mu_n)$ is non-zero only when $\mu_1 + \cdots + \mu_n = 2g-2+n$, stated in the proof of Proposition~\ref{prop:recursion}). It follows that
\[
\sum_{m=0}^\infty p_{2m+1} \frac{2m+1}{\h} \frac{\partial Z}{\partial p_{2m+1}} = \frac{\partial Z}{\partial \h}. \qedhere
\]
{\em Second proof.}
For $Z=\exp{F}$ the cut-and-join equation $\frac{\partial Z}{\partial \h} = MZ$ is equivalent to the equation
\[
\frac{\partial F}{\partial\hbar}=\frac{1}{8} p_1F + \frac{1}{2} \sum_{i,j\text{\ odd}} ij p_{i+j+1} \left(\frac{\partial^2}{\partial p_i\partial p_j}F+\frac{\partial}{\partial p_i}F\frac{\partial}{\partial p_j}F\right) + \sum_{i,j\text{\ odd}} (i+j-1) p_ip_j \frac{\partial}{\partial p_{i+j-1}}F
\]
which is equivalent to topological recursion via \eqref{TR}.
\end{proof}

\begin{corollary} \label{th:flow}
The Bessel partition function can be expressed as
\[
Z(p_1, p_2, \ldots; \h) = \exp(\h M) \cdot 1 = \sum_{k=0}^\infty \frac{\h^k}{k!} M^k \cdot 1.
\]
\end{corollary}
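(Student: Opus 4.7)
The plan is to integrate the evolution equation $\frac{\partial Z}{\partial \h} = MZ$ from Theorem~\ref{thm:evolution}. Since $M$ contains no $\h$ and no $\partial/\partial \h$, this is formally a first-order linear ODE in $\h$ with constant operator coefficient, whose solution with initial value $Z|_{\h=0}$ is $\exp(\h M) \cdot Z|_{\h=0}$. Thus the corollary reduces to checking the initial condition $Z|_{\h=0} = 1$, making sense of $\exp(\h M) \cdot 1$ as a formal series, and verifying uniqueness of solutions.

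The initial condition is immediate: each monomial in $F = \log Z$ carries a factor $\h^{2g-2+n}$ with $g \geq 1$ and $n \geq 1$, hence $2g-2+n \geq 1$, so $F|_{\h=0} = 0$ and $Z|_{\h=0} = 1$. For the second point, I would inspect the three summands of $M$ and observe that each raises the weighted $p$-degree by exactly $1$, where $p_i$ is assigned weight $i$: the term $\frac{1}{8} p_1$ adds weight $1$; the term $\frac{1}{2} \sum ij\, p_{i+j+1} \frac{\partial^2}{\partial p_i \partial p_j}$ trades two variables of total weight $i+j$ for one of weight $i+j+1$; and the term $\sum (i+j-1) p_i p_j \frac{\partial}{\partial p_{i+j-1}}$ trades one variable of weight $i+j-1$ for two of total weight $i+j$. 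Consequently $M^k \cdot 1$ is a polynomial in the $p_i$ of pure weighted degree $k$, and $\sum_{k \geq 0} \frac{\h^k}{k!} M^k \cdot 1$ is a well-defined element of $\mathbb{Q}[[\h, p_1, p_2, \ldots]]$ in which the coefficient of each fixed monomial is a finite sum.

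For uniqueness, I would expand any $W \in \mathbb{Q}[[\h, p_1, p_2, \ldots]]$ as $W = \sum_{k \geq 0} \frac{\h^k}{k!} W_k$ with $W_k \in \mathbb{Q}[[p_1, p_2, \ldots]]$; the equation $\frac{\partial W}{\partial \h} = MW$ is then equivalent to the recursion $W_{k+1} = M W_k$, which determines $W$ uniquely from $W_0$. Both $Z$ and $\exp(\h M) \cdot 1$ satisfy this recursion with $W_0 = 1$, so they coincide. The only substantive step is the weight-raising property of $M$ needed to legitimise the exponential; given Theorem~\ref{thm:evolution}, there is no real obstacle.
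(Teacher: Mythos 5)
Your argument is correct and matches the paper's (implicit) reasoning: the corollary is stated there without proof as the immediate integration of the evolution equation $\frac{\partial Z}{\partial \h} = MZ$ from Theorem~\ref{thm:evolution} with initial condition $Z|_{\h=0}=1$. Your additional checks --- that $M$ raises the weighted $p$-degree by exactly one (which the paper itself notes when observing that the exponent of $\h$ records the weighted degree), so that $\exp(\h M)\cdot 1$ is well defined and the coefficient recursion $W_{k+1}=MW_k$ has a unique solution --- are exactly the points a careful reader would supply.
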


This gives an effective way to calculate $Z$. We present here the Bessel partition function $Z$ and corresponding free energy $F = \log (Z)$ up to terms of order $\h^6$.
\begin{align*}
Z(\mathbf{p}; \h) &= 1 + \frac{1}{2^3} p_1 \h + \frac{9}{2^7} p_1^2 \h^2 + \Big( \frac{3}{2^7} p_3 + \frac{51}{2^{10}} p_1^3 \Big) \h^3 + \Big( \frac{75}{2^{10}} p_3 p_1 + \frac{1275}{2^{15}} p_1^4 \Big) \h^4 \\
&+ \Big( \frac{45}{2^{10}} p_5 + \frac{2475}{2^{14}} p_3 p_1^2 + \frac{8415}{2^{18}} p_1^5 \Big) \h^5 + \Big( \frac{1845}{2^{13}} p_5 p_1 + \frac{2025}{2^{15}} p_3^2 + \frac{33825}{2^{17}} p_3 p_1^3 + \frac{115005}{2^{22}} p_1^6 \Big) \h^6 + \cdots \\
F(\mathbf{p}; \h) &= \frac{1}{8} p_1 \h + \frac{1}{16} p_1^2 \h^2 + \Big( \frac{3}{128} p_3 + \frac{1}{24} p_1^3 \Big) \h^3 + \Big( \frac{9}{128} p_3 p_1 + \frac{1}{32} p_1^4 \Big) \h^4 \\
&+ \Big( \frac{45}{1024} p_5 + \frac{9}{64} p_3 p_1^2 + \frac{1}{40} p_1^5 \Big) \h^5 + \Big( \frac{1}{48} p_1^6 + \frac{15}{64} p_3 p_1^3 + \frac{63}{1024} p_3^2 + \frac{225}{1024} p_5 p_1 \Big) \h^6 + \cdots\end{align*}

\begin{remark}
The operator
\[
M = \frac{1}{8} p_1 + \frac{1}{2} \sum_{i,j\text{\ odd}} ij p_{i+j+1} \frac{\partial^2}{\partial p_i\partial p_j} + \sum_{i,j\text{\ odd}} (i+j-1) p_ip_j \frac{\partial}{\partial p_{i+j-1}}
\]
is not an element of the Lie algebra $\widehat{gl(\infty)}$. If it were, then since 1 is a tau-function of the KP hierarchy and the action of $\widehat{GL(\infty)}$ maps KP tau-functions to KP tau-functions, then Corollary~\ref{th:flow} could be used to give another proof that that $Z$ is a KP tau-function. Since $Z$ is a function only of $p_i$ for $i$ odd, one could then deduce that it is a KdV tau-function. One can prove that $M\notin\widehat{gl(\infty)}$ using the fact that $p_1$ is a KP tau-function while $\exp(\hbar M)\cdot p_1$ is not a KP tau-function, which can be observed from the expansion in $\hbar$.
\end{remark}

\section{The quantum curve} \label{sec:quantumcurve}

Consider the wave function $\psi(z, \h)$ formed from the following so-called principal specialisation of the partition function.
\begin{align*}
\psi(z,\h) &= \left. Z(p_1, p_2, \ldots; \h) \right|_{p_i=z^{-i}} \\
&= \exp\left[ \sum_{g=1}^\infty \sum_{n=1}^\infty \sum_{\mu_1, \ldots, \mu_n=1}^\infty U_{g,n}(\mu_1, \ldots, \mu_n) \frac{\h^{2g-2+n}}{n!} z^{-(\mu_1 + \cdots + \mu_n)} \right] \\
&= 1 + \frac{1}{8} \frac{\h}{z} + \frac{9}{128} \frac{\h^2}{z^2} + \frac{75}{1024} \frac{\h^3}{z^3} + \frac{3675}{3268} \frac{\h^4}{z^4} + \frac{59535}{262144} \frac{\h^5}{z^5} + \frac{2401245}{4194304} \frac{\h^6}{z^6} + \frac{57972915}{33554432} \frac{\h^7}{z^7} + \cdots
\end{align*}

\[
\psi_0(z,\h) = \exp(z/\h) z^{-1/2} \psi(z,\h)
\]

\begin{theorem} \label{thm:quantumcurve}
The wave function $\psi(z, \h)$ satisfies the differential equation
\[
\frac{1}{2}z^2\frac{d^2}{d z^2}\psi+\h^{-1} z^2\frac{d}{d z}\psi+\frac{1}{8}\psi=0.
\] 
Equivalently, the modified wave function $\psi_0(z, \h)$ satisfies the differential equation
\[
\left[ \h^2 z^2 \frac{d^2}{d z^2} + \h^2 z \frac{d}{d z} - z^2 \right] \psi_0(z, \h) = 0.
\]
\end{theorem}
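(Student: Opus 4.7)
The plan is to derive the quantum curve directly from the Virasoro constraints of Theorem~\ref{thm:virasoro}. The key observation is that a suitable $z$-dependent linear combination of the operators $L_m$ should collapse, under the principal specialisation $p_i = z^{-i}$, into a second-order ordinary differential operator in $z$ acting on $\psi$. Concretely, I will work with
\[
\mathcal{L} = \sum_{m=0}^{\infty} 2z^{-2m} L_m,
\]
which annihilates $Z$ by Theorem~\ref{thm:virasoro}. This sum is well-defined as a formal power series in $\h$ because the weighted-degree property $\mu_1 + \cdots + \mu_n = 2g-2+n$ noted in the proof of Proposition~\ref{prop:recursion} forces only finitely many $m$ to contribute at each order in $\h$ after specialisation.

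Next I will translate $\mathcal{L} Z = 0$ into a differential equation on $\psi$ by computing $d\psi/dz$ and $d^2\psi/dz^2$ via the chain rule. Since $Z$ depends only on odd-indexed $p_i$ (as $U_{g,n}$ vanishes unless all $\mu_i$ are odd), one has
\[
\frac{d\psi}{dz} = -\sum_{k \text{ odd}} \frac{k}{z^{k+1}} \left.\frac{\partial Z}{\partial p_k}\right|_{p=z^{-i}},
\]
together with an analogous formula for $d^2\psi/dz^2$ involving both $\partial_{p_k} Z$ terms and $\partial_{p_j}\partial_{p_k} Z$ cross-terms. I then match pieces: the linear term $-\frac{m+1/2}{\h}\partial_{p_{2m+1}}$ in $L_m$, weighted by $2z^{-2m}$ and summed, reproduces $\frac{z^2}{\h}\psi'$; the quadratic term $(m+i/2)p_i \partial_{p_{2m+i}}$, after re-indexing $k = 2m+i$ and counting the $(k+1)/2$ ways of splitting odd $k$, produces the first-derivative part of $\frac{z^2}{2}\psi''$ with the correct coefficient $k(k+1)/2$; the bilinear term $\frac{ij}{4}\partial_{p_i}\partial_{p_j}$ with $i+j = 2m$ yields the cross-derivative part of $\frac{z^2}{2}\psi''$; and the anomaly $\frac{1}{16}\delta_{m,0}$ supplies the $\frac{1}{8}\psi$ contribution. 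Collecting these gives the first differential equation in the theorem.

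Finally, to deduce the second form on $\psi_0$, I will conjugate the operator $\frac{1}{2}z^2 \partial_z^2 + \h^{-1} z^2 \partial_z + \frac{1}{8}$ by $M = e^{-z/\h} z^{1/2}$. Using $M^{-1}\partial_z M = \partial_z + M'/M$ with $M'/M = -\h^{-1} + 1/(2z)$ and expanding, the conjugated operator simplifies, after cancellation of the $\h^{-1}$ and $\h^{-2}$ contributions to the zeroth-order part, to $\frac{1}{2\h^2}\bigl(\h^2 z^2 \partial_z^2 + \h^2 z \partial_z - z^2\bigr)$, as required. The main obstacle is the combinatorial bookkeeping in the matching step: verifying that the single choice of weight $c_m = 2z^{-2m}$, essentially forced by the linear term, simultaneously reproduces the correct coefficients $k(k+1)/2$ and $jk$ appearing in the chain-rule expansion of $z^2 \psi''$, with no spurious contributions left over. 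This compatibility is precisely what encodes the quantum curve as an $\h$-deformation of the classical relation $xy^2 = \tfrac{1}{2}$.
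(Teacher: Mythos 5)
Your argument is correct, and it is in substance the same computation as the paper's, just packaged differently. The paper derives the ODE by taking the principal specialisation of the cut-and-join equation $\partial_\h Z = MZ$ (Theorem~\ref{thm:evolution}); but that equation is itself obtained from the linear combination $\sum_m 2p_{2m+1}L_m$ of Virasoro operators, which under $p_i = z^{-i}$ is exactly $z^{-1}$ times your combination $\sum_m 2z^{-2m}L_m$. So your route inlines Theorem~\ref{thm:evolution} and works straight from Theorem~\ref{thm:virasoro}; the term-by-term matching you describe checks out --- the linear term gives $\h^{-1}z^2\psi'$, the $(k+1)/2$ count of splittings $k=2m+i$ turns $(m+\tfrac{i}{2})p_i\partial_{p_{2m+i}}$ into the $\tfrac{1}{2}k(k+1)z^{-k}\partial_{p_k}$ piece of $\tfrac{1}{2}z^2\psi''$, the bilinear term supplies the cross-derivatives, and the anomaly gives $\tfrac{1}{8}\psi$ --- and the well-definedness of the infinite sum follows from the weighted-homogeneity of $Z$ exactly as you say. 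The one place where the paper's organisation buys something extra is that the $\h$-derivative appears explicitly there, making the statement $Z=\exp(\h M)\cdot 1$ and the ``evolution'' interpretation available, whereas your version never needs the identification $\sum_m p_{2m+1}\frac{2m+1}{\h}\partial_{p_{2m+1}}Z=\partial_\h Z$. Conversely, you do more than the paper on the second assertion: the paper merely states that the equation for $\psi_0$ follows, while your conjugation by $e^{-z/\h}z^{1/2}$, with $M'/M=-\h^{-1}+\tfrac{1}{2z}$ and the cancellation of the $\h^{-1}$ and $\h^{-2}$ zeroth-order terms, is the correct and complete verification, yielding $\tfrac{1}{2\h^2}\bigl(\h^2z^2\partial_z^2+\h^2z\partial_z-z^2\bigr)$ as claimed.
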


\begin{proof}
Start with the evolution equation
\[
\left( \frac{1}{8} p_1 + \frac{1}{2} \sum_{i,j\text{\ odd}} ij p_{i+j+1} \frac{\partial^2}{\partial p_i\partial p_j} + \sum_{i,j\text{\ odd}} (i+j-1) p_ip_j \frac{\partial}{\partial p_{i+j-1}} - \frac{\partial}{\partial \h} \right) Z(p_1, p_2, \ldots; \h) = 0
\]

Consider taking the principal specialisation of this equation.
\begin{align*}
& \left[ \left( \frac{1}{8} p_1 + \frac{1}{2} \sum_{i,j\text{\ odd}} ij p_{i+j+1} \frac{\partial^2}{\partial p_i\partial p_j} + \sum_{i,j\text{\ odd}} (i+j-1) p_ip_j \frac{\partial}{\partial p_{i+j-1}} - \frac{\partial}{\partial \h} \right) p_{\mu_1} \cdots p_{\mu_n} \h^{|\mmu|} \right] \\
&\qquad= \left[ \left( \frac{1}{8} z^{-1} + \frac{1}{2} \sum_{k \neq \ell} \mu_k \mu_\ell z^{-1} + \frac{1}{2} \sum_{k=1}^n (\mu_k^2 + \mu_k) z^{-1} - |\mmu| \h^{-1} \right) p_{\mu_1} \cdots p_{\mu_n} \h^{|\mmu|} \right] \\
&\qquad= \left[ \left( \frac{1}{8} z^{-1} + \frac{1}{2} |\mmu|^2 z^{-1} + \frac{1}{2} |\mmu| z^{-1} - |\mmu| \h^{-1} \right) z^{-|\mmu|} \h^{|\mmu|} \right] \\
&\qquad= \left[ \left( \frac{1}{8} z^{-1} + \frac{1}{2} \frac{d}{d z} z \frac{d}{d z} - \frac{1}{2} \frac{d}{d z} + \h^{-1} z \frac{d}{d z} \right) z^{-|\mmu|} \h^{|\mmu|} \right] \\
&\qquad= \left[ \left( \frac{1}{8} z^{-1} + \frac{1}{2} \frac{d}{d z} + \frac{1}{2} z \frac{d^2}{d z^2} - \frac{1}{2} \frac{d}{d z} + \h^{-1} z \frac{d}{d z} \right) z^{-|\mmu|} \h^{|\mmu|} \right] \\
&\qquad= \left[ \left( \frac{1}{8} z^{-1} + \frac{1}{2} z \frac{d^2}{d z^2} + \h^{-1} z \frac{d}{d z} \right) z^{-|\mmu|} \h^{|\mmu|} \right] \\
\end{align*}

It follows that
\[
\left[ \frac{1}{2} z^2 \frac{d^2}{d z^2} + \h^{-1} z^2 \frac{d}{d z} + \frac{1}{8} \right] \psi(z, h) = 0.
\]

\end{proof}

\begin{corollary}
\[
\psi(z, \h) = \sum_{d=0}^\infty \frac{(2d)!^2}{32^d d!^3} \frac{\h^d}{z^d} = \sum_{d=0}^{\infty} \frac{(2d-1)!!^2}{8^d d!} \left(\frac{\h}{z}\right)^d
\]
\end{corollary}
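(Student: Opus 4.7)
The plan is to use the quantum curve equation from Theorem~\ref{thm:quantumcurve} as an ordinary differential equation in $z$ with $\h$ as a parameter, and solve it by a power-series ansatz in $\h/z$. From the construction of $\psi$ as a principal specialisation of the partition function, every monomial $p_{\mu_1} \cdots p_{\mu_n} \h^{2g-2+n}$ contributes a term of the form $\h^{2g-2+n} z^{-(\mu_1 + \cdots + \mu_n)} = (\h/z)^{2g-2+n}$, using the weighted-degree observation (from the proof of Theorem~\ref{thm:evolution}) that $\mu_1 + \cdots + \mu_n = 2g-2+n$ on the support of $U_{g,n}$. Consequently $\psi$ is a priori a formal power series in the single variable $\h/z$, justifying the ansatz
\[
\psi(z,\h) = \sum_{d=0}^\infty a_d \left(\frac{\h}{z}\right)^d, \qquad a_0 = 1.
\]

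Next I would substitute this ansatz into the equation
\[
\frac{1}{2} z^2 \frac{d^2}{dz^2}\psi + \h^{-1} z^2 \frac{d}{dz}\psi + \frac{1}{8}\psi = 0.
\]
Applying $z^2 \partial_z^2$ to $(\h/z)^d = \h^d z^{-d}$ produces $d(d+1)\h^d z^{-d}$, while $\h^{-1} z^2 \partial_z$ produces $-d\, \h^{d-1} z^{-d+1} = -d(\h/z)^{d-1}$. Matching coefficients of $(\h/z)^d$ after an index shift in the middle term yields the simple two-term recursion
\[
(d+1) a_{d+1} \;=\; \left[\tfrac{1}{2}d(d+1) + \tfrac{1}{8}\right] a_d \;=\; \frac{(2d+1)^2}{8}\, a_d.
\]

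Iterating this recursion with $a_0 = 1$ gives
\[
a_d = \prod_{k=0}^{d-1} \frac{(2k+1)^2}{8(k+1)} = \frac{\bigl[(2d-1)!!\bigr]^2}{8^d\, d!},
\]
which is the second expression in the corollary. The first expression follows by the standard identity $(2d-1)!! = (2d)!/(2^d d!)$, which converts $8^d d!$ in the denominator into $32^d (d!)^3$. The only subtlety is to verify that this series actually equals $\psi$ rather than a second, linearly independent solution of the ODE, but this is automatic: the ODE is regular singular at $z=\infty$ and the two indicial roots at $z=\infty$ differ (the ansatz above corresponds to the unique formal solution regular at $\h/z = 0$ with value $1$ there), and $\psi$ itself satisfies both the equation and the normalisation $\psi|_{\h=0}=1$. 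Hence the two series agree as formal power series. There is no real obstacle in this argument; the work is essentially a direct verification once the ODE and the ansatz are in place.
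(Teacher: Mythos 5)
Your proposal is correct and follows essentially the same route as the paper: substitute the ansatz $\psi=\sum_d a_d(\h/z)^d$ into the quantum curve ODE, extract the two-term recursion $a_{d+1}=\tfrac{1}{8}\tfrac{(2d+1)^2}{d+1}a_d$, and telescope the product to obtain $a_d=(2d-1)!!^2/(8^d d!)$. Your additional remarks justifying the single-variable ansatz via the weighted-degree constraint and confirming the normalisation $a_0=1$ are sound (the paper leaves these implicit) but do not change the argument.
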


\begin{proof}
Put $\psi(z, \h) = \sum_{d=0}^\infty a_d \frac{\h^d}{z^d}$, so we have
\begin{align*}
0=\left[ \frac{1}{2} z^2 \frac{d^2}{d z^2} + \h^{-1} z^2 \frac{d}{d z} + \frac{1}{8} \right] \sum_{d=0}^\infty a_d \frac{\h^d}{z^d} &= \sum_{d=0}^\infty \left[ \frac{1}{2} d(d+1) a_d \frac{\h^d}{z^d} - d a_d \frac{\h^{d-1}}{z^{d-1}} + \frac{1}{8} a_d \frac{\h^d}{z^d} \right]\\
&=
\sum_{d=0}^\infty \left[ \frac{1}{2} d(d+1) a_d - (d+1) a_{d+1} + \frac{1}{8} a_d \right] \frac{\h^d}{z^d} 
\end{align*}
Thus $\displaystyle\frac{1}{2} d(d+1) a_d - (d+1) a_{d+1} + \frac{1}{8} a_d = 0 \quad \Rightarrow \quad a_{d+1} = \frac{\frac{1}{2} d(d+1) + \frac{1}{8}}{d+1} a_d = \frac{1}{8} \frac{(2d+1)^2}{d+1} a_d$,

hence
$\displaystyle a_d = \prod_{i=1}^d \frac{a_i}{a_{i-1}} = \prod_{i=1}^d \frac{1}{8} \frac{(2i-1)^2}{i} = \frac{1}{8^d} \frac{(2d-1)!!^2}{d!}.$
\end{proof}

From the viewpoint of Gukov--Su{\l}kowski, we should define the wave function thus:
\[
\psi_0(z, \h) = \exp(z/\h) z^{-1/2} \psi(z, \h).
\] 
Then from the differential equation above, we obtain
\[
\left[ \h^2 z^2 \frac{d^2}{d z^2} + \h^2 z \frac{d}{d z} - z^2 \right] \psi_0(z, \h) = 0.
\]

In terms of the operators $\widehat{x} = x = \frac{1}{2} z^2$ and $\widehat{y} = \h \frac{d}{d x} = \frac{\h}{z} \frac{d}{d z}$, we can write this as
\[
\left[ 2\hat{y}\hat{x}\hat{y} - 1 \right] \psi_0(z, \h) = 0.
\]
This is the quantum curve equation and its semi-classical limit is $2xy^2 - 1 = 0$, from which we recover the spectral curve. This quantum curve was also obtained by Alexandrov in~\cite{AleCut}, and by Bouchard and Eynard~\cite{BEyRec} where its relation to topological recursion was proven as a consequence of a much more general theorem for a large class of rational spectral curves.

Note that the differential equation above is the modified Bessel's equation (after setting $h=1$) with parameter 0. Hence, $\psi_0(z,\h)=K_0(z/\h)$ is the modified Bessel function. The name {\em Bessel curve} derives from this.

\begin{equation} \label{wavef}
\psi_0(z,\h)=\exp(\h^{-1}S_0(z)+S_1(z)+\h S_2(z)+\h^2 S_3(z)+...)
\end{equation}
For $k>1$, this is conjecturally given by
\begin{equation} \label{exactSk}
S_k(z)=\sum_{2g-1+n=k}\frac{1}{n!}\int^z_{\infty}\int^z_{\infty}...\int^z_{\infty}\omega^g_n(z_1,...,z_n)
\end{equation}
where $\omega^g_n(z_1,...,z_n)$ are multidifferentials for each $g\geq 0$, $n>0$ recursively defined on the curve $x=\frac{1}{2}z^2$, $y=1/z$ via topological recursion. This conjecture is addressed by Gukov and Su\l kowski in~\cite{GSuApo} together with the related issue of constructing $\widehat{P}(\x,\y)$ algorithmically from the wave function.

\[
\int^z_{\infty}\int^z_{\infty}...\int^z_{\infty}\omega^g_n(z_1,...,z_n)=\sum_{\mu} (-1)^nU_g(\mu_1, \ldots, \mu_n) \prod_{i=1}^n \left.z_i^{-\mu_i}\right|_{z_i=z}=\sum_{\mu} (-1)^nU_g(\mu_1, \ldots, \mu_n) z^{2-2g-n}
\]
since $\sum\mu_i=2g-2+n$. Hence
\[
\sum_{k>1}\h^{k-1}S_k(z)=\sum_{g,n,\mu}\frac{(-1)^n}{n!}U_g(\mu_1, \ldots, \mu_n) \left(\frac{\h}{z}\right)^{2g-2+n}=\log Z|_{\{p_i=z^{-i},s=-\h\}}
\]
Now $\frac{d}{dx}S_0(z)=-y$ hence $S_0(z)=-z$ and $S_1(z)=-\frac{1}{2}\log\frac{dx}{dz}=-\frac{1}{2}\log z$ so
\[
\psi_0(z,\h)=\exp(\h^{-1}S_0(z)+S_1(z))\cdot Z|_{\{p_i=z^{-i},s=-\h\}}=e^{-z/\h}z^{-\frac{1}{2}}\cdot Z|_{\{p_i=z^{-i},s=-\h\}}
\]
and this gives agrees with the asymptotic expansion of the modified Bessel function $\psi_0=K_0$.

\textsc{School of Mathematical Sciences, Monash University, VIC 3800, Australia} \\
\emph{Email:} \href{mailto:norm.do@monash.edu}{norm.do@monash.edu}

\textsc{School of Mathematics and Statistics, The University of Melbourne, VIC 3010, Australia} \\
\emph{Email:} \href{mailto:pnorbury@ms.unimelb.edu.au}{pnorbury@ms.unimelb.edu.au}

\end{document}